\newtheorem{property}{Property}
\begin{document}



\title{Prefix Block-Interchanges on Binary and Ternary Strings}

\address{msrahman{@}cse.buet.ac.bd}

\author{Md. Khaledur Rahman\\
Bangladesh University of Engineering and Technology, Dhaka-1000, Bangladesh\\
khaled.cse.07@gmail.com
\and M. Sohel Rahman\\
Department of CSE \\
Bangladesh University of Engineering and Technology, Dhaka-1000, Bangladesh\\
msrahman{@}cse.buet.ac.bd
 }

\runninghead{M. K. Rahman, M. S. Rahman}{Prefix Block-Interchanges on Binary and Ternary Strings}
\maketitle
\begin{abstract}
The genome rearrangement problem computes the minimum number of operations that are required to sort all elements of a permutation. A block-interchange operation exchanges two blocks of a permutation which are not necessarily adjacent and in a prefix block-interchange, one block is always the prefix of that permutation. In this paper, we focus on applying prefix block-interchanges on binary and ternary strings. We present upper bounds to group and sort a given binary/ternary string. We also provide upper bounds for a different version of the block-interchange operation which we refer to as the `restricted prefix block-interchange'. We observe that our obtained upper bound for restricted prefix block-interchange operations on binary strings is better than that of other genome rearrangement operations to group fully normalized binary strings. Consequently, we provide a linear-time algorithm to solve the problem of grouping binary normalized strings by restricted prefix block-interchanges. We also provide a polynomial time algorithm to group normalized ternary strings by prefix block-interchange operations. Finally, we provide a classification for ternary strings based on the required number of prefix block-interchange operations.

\end{abstract}

\begin{keywords} 
block-interchange; prefix block-interchange; binary strings; ternary strings; genome rearrangement.
\end{keywords}

\section{Introduction}

Genome rearrangement has become a popular combinatorial problem in the field of computational biology which helps to trace the evolutionary distance between two species. The genome is structurally specific to each species, and it changes only slowly over time. Though the gene content of two genomes is almost identical, gene order can be quite different, which results in different species \cite{sankoff}. There are several genome rearrangement events such as \emph{reversal, transposition, block-interchanges, etc.} \cite{christie,hartman,christie2,rahman4} that contribute much to such biological diversity. In this paper, we will only discuss block-interchanges and its variations. 

In computational biology, a genome is often represented as a \emph{permutation} consisting of all elements from 1 to $n$, where $n$ is the length of the \emph{permutation}. In an \emph{identity} permutation, all elements are sorted in ascending order from 1 to $n$. A block-interchange involves swapping two blocks of a permutation which are not necessarily adjacent. A prefix block-interchange is a variation of block-interchange where one of the two blocks must be a prefix of the permutation. The prefix block-interchange distance between two permutations can be used to estimate the number of global mutations between genomes and can be used by molecular biologists to infer evolutionary and functional relationships. \emph{Sorting by prefix block-interchanges} is the problem of finding the minimum number of prefix block-interchange operations needed to transform a given permutation into the \emph{identity} permutation. 

A natural variant of the previously mentioned sorting problems is to apply genome rearrangement operations not on permutations but on strings over fixed size alphabets. A string over the alphabet $\Sigma$ is a sequence of symbols where repetitions of symbols in $\Sigma$ are allowed. This shift is inspired by the biological observation that multiple ``copies" of the same gene can appear at various places along the genome \cite{sankoff}. 

The grouping problem of normalized strings by genome rearrangement operations provides upper bound i.e., the number of required operations to group all symbols of a string and the sorting problem computes the number of operations required to sort all grouped symbols of that normalized string by genome rearrangement operations. 
Indeed, some interesting works by Christie and Irving~\cite{christie}, Radcliffe et al.~\cite{radcliffe}, Hurkens et al.~\cite{hurkens}, Dutta et al.~\cite{amit}, and Rahman and Rahman~\cite{khaled} have explored the consequences of switching from permutations to strings. Notably, such rearrangement operations on strings have been found to be interesting and important in the study of orthologous gene assignment~\cite{chen}, especially if the strings have only low level of symbol repetition. We provide a summary of results on grouping and sorting strings by genome rearrangement operations in Table \ref{summary-tab}.

\begin{table}[]
\centering

\begin{tabular}{|l|c|l|l|c|c|}
\hline
\textbf{Authors}                                                                              & \multicolumn{2}{l|}{\textbf{GR operation}}                                                                                & \textbf{Type} & \textbf{Grouping} & \textbf{Sorting} \\ \hline
\multirow{2}{*}{Hurkens et al.~\cite{hurkens}}                                              & \multicolumn{2}{c|}{\multirow{2}{*}{PR}}                                                            & Binary          &         $n-2$    & $n-2$\\ \cline{4-6} 
                                                                                     & \multicolumn{2}{l|}{}                                                                                            & Ternary         &        $n-3$  &   $n-3$\\ \hline
\multirow{2}{*}{Dutta et al.~\cite{amit}}                                                   & \multicolumn{2}{c|}{\multirow{2}{*}{PT}}                                                       & Binary          &       $\lceil \frac{n-2}{2} \rceil$     & $\lceil \frac{n-2}{2} \rceil$ \\ \cline{4-6} 
                                                                                     & \multicolumn{2}{c|}{}                                                                                            & Ternary         &      $\lceil \frac{n-3}{2}  \rceil + 1$   &  $\lceil \frac{n-3}{2}  \rceil + 2$ \\ \hline
\multirow{2}{*}{\begin{tabular}[c]{@{}l@{}}Rahman and \\ Rahman~\cite{khaled}\end{tabular}} & \multicolumn{2}{c|}{\multirow{2}{*}{\begin{tabular}[c]{@{}c@{}}PSTR\end{tabular}}} & Binary        & $\lceil \frac{n-2}{2} \rceil $  &  $\lceil \frac{n-2}{2} \rceil +1$           \\ \cline{4-6} 
                                                                                     & \multicolumn{2}{l|}{}                                                                                            & Ternary     & $\lceil \frac{n-2}{2} \rceil + 1$    &            $\lceil \frac{n-2}{2} \rceil + 2$ \\ \hline
Chou et al.~\cite{compete}                                                                  & \multicolumn{2}{c|}{PBI}                                                                    & Binary          &         $\lceil \frac{n-2}{3} \rceil$  &  $\lceil \frac{n-2}{3} \rceil$  \\ \hline
\multirow{3}{*}{\begin{tabular}[c]{@{}l@{}}This article \end{tabular}} & \multicolumn{2}{c|}{\multirow{1}{*}{\begin{tabular}[c]{@{}l@{}}RPBI\end{tabular}}} & Binary     & $\lceil \frac{n-2}{4} \rceil$     &  X           \\ \cline{4-6}
& \multicolumn{2}{c|}{\multirow{1}{*}{\begin{tabular}[c]{@{}c@{}}RPBI\end{tabular}}} & Ternary     & $\lceil \frac{n-2}{2} \rceil$     &  X           \\ \cline{4-6}
& \multicolumn{2}{c|}{\multirow{1}{*}{\begin{tabular}[c]{@{}c@{}}PBI\end{tabular}}} & Binary   & $\lceil \frac{n-2}{3} \rceil$       &  $\lceil \frac{n-2}{3} \rceil +1$           \\ \cline{4-6} 
                                                                                     & \multicolumn{2}{c|}{PBI}                                                                                            & Ternary         &         $\lceil \frac{n-2}{2} \rceil $  & $\lceil \frac{n-2}{2} \rceil + 2$ \\ \hline
\end{tabular}
\caption{A list of results on grouping and sorting strings by Genome Rearrangement (GR) operations. Here, $n$ is the length of the string. Other notations are as follows: $X$-Not Available, $PR$-Prefix Reversal, $PT$-Prefix Transposition, $PSTR$-Prefix and Suffix TransReversal, $PBI$-Prefix Block-Interchange and $RPBI$-Restricted Prefix Block-Interchange.}
\label{summary-tab}
\end{table}

Chen et al.~\cite{chen}, presented polynomial-time algorithms for computing the minimum number of reversals and transpositions operations to sort a given binary string. They also gave exact constructive diameter results on binary strings. Radcliffe et al.~\cite{radcliffe} on the other hand gave refined and generalized reversal diameter results for non fixed size alphabets. Hurkens et al.~\cite{hurkens} introduced \emph{grouping} (a weaker form of sorting), where identical symbols need only be grouped together, while groups can be in any order. In the sequel, they gave a complete characterization of the minimum number of prefix reversals required to group (and sort) binary and ternary strings. Their proposed upper bound for grouping binary string is $n-2$, where $n$ is the length of the binary string $s$. Subsequently, Dutta et al. \cite{amit} followed up their work~\cite{hurkens} on binary and ternary strings to apply prefix transpositions introducing relabeling. They gave a complete characterization of the minimum number of prefix transpositions required to group (and sort) binary and ternary strings. Their proposed upper bound to group a binary string is $\lceil \frac{n-2}{2} \rceil$, where $n$ is the length of the string. It may be noted that, apart from being a useful aid for sorting, grouping itself is a problem of interest in its own right~\cite{eriksson}. Chou et al. also proposed a bound only for binary strings applying prefix block-interchanges \cite{compete}. Their proposed bound to group binary strings is $\lceil \frac{n-2}{3} \rceil$. They also proposed a linear time algorithm to sort binary strings. Finally, Rahman and Rahman \cite{khaled} worked on prefix and suffix versions of transreversal operations and derived similar bounds for binary and ternary strings as in \cite{amit}. Their proposed bound to group binary strings considering one operation at a time is $\lceil \frac{n-2}{2} \rceil + 1$, where $n$ is the length of the string. 

In this paper, we follow up the works of~\cite{amit,khaled,hurkens,compete} and consider prefix block-interchange operations to group and sort binary and ternary strings. Notably, as a future work in \cite{hurkens}, the authors raised the issue of considering other genome rearrangement operators. The main contributions of this paper are as follows. We deduce the number of prefix block-interchanges required to group and sort binary strings (Section \ref{section_grp}). Then we find the number of prefix block-interchanges required to group and sort ternary strings (Section \ref{section_grpt}). Our deduced bound to group fully binary normalized string $s$ is $\lceil \frac{n-2}{4} \rceil$ considering restricted prefix block-interchange operations where $n$ is the length of the string (Section \ref{restrictedpblock}). This is interesting because this bound is better than other proposed bounds for other rearrangement operations in the literature. We also provide a linear-time algorithm to group binary normalized strings by restricted prefix block-interchange operations. Then, we deduce the upper bound of $\lceil \frac{n-2}{2} \rceil $ to group fully normalized ternary strings by prefix block-interchanges, where $n$ is the length of the string. We also provide a polynomial time algorithm to group normalized ternary strings (Section \ref{section_grpt}). We also present a classification of normalized ternary strings (Section \ref{sec:class}).

\section{Preliminaries}\label{pre}
We use the similar notations and definitions used in~\cite{amit,khaled,hurkens,compete}, which are briefly reviewed below for the sake of completeness. We represent a binary string by $s = s[1]s[2]\ldots s[n]$ such that $\forall{i}: s[i] = 0$ or $1$. Similarly, we represent a ternary string by $s = s[1]s[2]\ldots s[n]$ such that $\forall{i}: s[i] = 0$ or $1$ or $2$.

We define block-interchange $\beta(w,x,y,z)$ on a string $s$ of length $n$, where $1\le w\le x < y \le z \le n$ and $s=s[1]\ldots s[w] \ldots s[x] \ldots s[y] \ldots s[z] \ldots s[n]$, as a rearrangement event that transforms ${s}$ into $s[1] \ldots s[w-1] s[y] \ldots s[z] s[x+1]\ldots s[y-1]s[w]\ldots s[x] s[z+1] \ldots s[n]$. When we consider prefix block-interchange, we actually perform $\beta(1,x,y,z)$. So, a prefix block-interchange event transforms $s$ into $s[y] \ldots s[z] s[x+1]\ldots s[y-1]s[1]\ldots s[x] s[z+1] \ldots s[n]$. The prefix block-interchange distance $d_{pbi}^s(s)$ of $s$ is defined as the minimum number of prefix block-interchanges required to sort the string $s$. Similarly, we define $d_{pbi}^g(s)$ as the minimum number of prefix block-interchanges required to group all symbols of the string $s$. After a prefix block-interchange operation, some adjacent symbols of the new formed string may be identical. We reduce all adjacent identical symbols to one symbol and get a string of reduced length which we call a \emph{normalized} string. So, in a normalized string, there will be no identical adjacent symbols. We consider two strings to be equivalent if we can convert them to the same normalized string. As representatives of the equivalence classes we take the shortest string of each class. Clearly, these are normalized strings where adjacent symbols are always different. In our work, we only consider normalized strings as in \cite{amit,khaled,hurkens}. This does not lose generality because we can convert any string to the respective normalized string, i.e., the representative of the class it belongs to.


For example, let us consider two strings `110001001011' and `100100011101' which are equivalent because these two strings have the same normalized string which is `1010101'. So, they belong to the same equivalence class where `1010101' is the representative. We consider $s=1010101$ and we want to apply a prefix block-interchange operation $\beta(1,2,5,5)$ on $s$. Now, $s[1]\ldots s[x]=s[1]s[2]=10$, $s[y]\ldots s[z]=s[5]=1$, $s[n] = s[7] = 1$. Therefore, after applying the operation we get $s=s[5]s[3]s[4]s[1]s[2]s[6]s[7]$ and finally get a reduced string as follows: $\overline{10}10\underline{1}01=\textbf{11}01\textbf{00}1 = 10101$.

\ \ \ A reduction that decreases the string length by $l$ after applying a prefix block interchange is called an $l$-pblockInterchange. So, if $l=0$, then we have a 0-pblockInterchange. The above example illustrates a 2-pblockInterchange.
%
%
%
%
%
%
%
%
%
%

A \emph{token} is a string of one or more symbols (0, 1 or 2) that is significant as a group. A \emph{quantifier} is a type of determiner that indicates quantity and it is basically applied after a \emph{token}. For instance, `+' is a \emph{quantifier} which indicates one or more occurrences of the \emph{token} after which `+' is applied. Note that a token may also consist of one symbol only. For example, consider the string $(10)^+201$. Here, \emph{10} is a \emph{token}, `1' and `0' are symbols of the token \emph{10} and `+' is the \emph{quantifier}. It means $10201$, $1010201$ or $101010201$ etc. belongs to the same representation class $(10)^+201$.

\section{Grouping and Sorting Binary Strings}\label{section_grp}
  The task of sorting a string can be divided into two subproblems, namely, \textit{grouping} the identical symbols together and then putting the groups of identical symbols in the right order. 
The possible length of binary string is either even or odd. As strings are normalized, only 4 kinds of binary strings are possible, namely, $(01)^+$, $(10)^+$, $(01)^+0$ and $(10)^+1$. We achieve the following properties for binary strings.

\begin{property}
\label{property1}
Let $s$ be a fully binary normalized string of length $n$, where $n\ge 2$. Then, we will need to reduce $n-2$ symbols to group $s$.
\end{property}

There will be at least one identical symbol in $s$ if $n>2$. So, we need to reduce those symbols applying prefix block-interchanges and then reducing consecutive identical symbols which results in a normalized string. Finally, after completion of grouping, we will get either 01 or 10, where 01 is already sorted. If we get 10, we will need one extra operation to sort it. 


\begin{property}
\label{property21}
Let $s$ be a fully binary normalized string of length $n$, where $1 < x \le y < n$ and $n\ge 5$. If we have a prefix block $s[1]$ and another block $s[x]\ldots s[y]$ such that all three of the following conditions (1-3) are satisfied, then we get a 3-pblockInterchange.
\end{property}
\begin{itemize}
\item{Condition 1: $s[1]=s[y+1]$}
\item{Condition 2: $s[1] =s[x-1]$}
\item{Condition 3: $s[x]=s[y]$} 
\end{itemize}

\begin{lemma}
\label{lemma0}
Let $s$ be a fully normalized binary string of length $n$ such that $n\ge 5$. Then, we always get 3-pblockInterchange(s) using prefix block-interchange operations.
\end{lemma}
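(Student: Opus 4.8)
The plan is to reduce the statement to a single application of Property~\ref{property21}: it suffices to exhibit, for every fully normalized binary string $s$ with $n \ge 5$, a prefix symbol $s[1]$ together with a block $s[x]\ldots s[y]$ (with $1 < x \le y < n$) that satisfies Conditions~1--3, since Property~\ref{property21} then immediately delivers a 3-pblockInterchange. The one structural fact I would lean on is that a fully normalized binary string alternates, so that $s[i] \ne s[i+1]$ and hence $s[i] = s[i+2]$ for every admissible $i$; in particular all symbols in odd positions agree, and all symbols in even positions agree. This parity observation is what makes the three conditions cheap to verify.

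First I would make the uniform choice $x = y = 4$, i.e.\ take the single-symbol block $s[4]$. Because $n \ge 5$, positions $3$, $4$, and $5$ all exist and $1 < 4 = x = y < n$, so the configuration is admissible for Property~\ref{property21}. Conditions~1--3 then follow at once from alternation: Condition~1 reads $s[1] = s[5]$ and Condition~2 reads $s[1] = s[3]$, both holding because $1$, $3$, $5$ are all odd positions, while Condition~3 reads $s[4] = s[4]$ and is trivial. Invoking Property~\ref{property21} with this block yields the desired 3-pblockInterchange. A useful feature of this choice is that it does not depend on which of the four normalized forms $(01)^+$, $(10)^+$, $(01)^+0$, $(10)^+1$ the string takes, so no case analysis on the type of $s$ is required.

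The step that demands the most care --- and the reason the hypothesis $n \ge 5$ cannot be weakened --- is the placement of the block relative to the prefix. I would be careful not to take the block adjacent to $s[1]$ (the degenerate choice $x = 2$): in that situation the segment lying strictly between the prefix and the block is empty, two of the adjacencies created by the operation coalesce into one, and fewer than three reductions occur even though Conditions~1--3 are formally met. Choosing $x = 4$ keeps the intervening segment $s[2]s[3]$ nonempty, which is precisely what forces all three new adjacencies produced by the prefix block-interchange to be reducible, and it is also where $n \ge 5$ is used, since $s[y+1] = s[5]$ must exist. I would finish by checking on a representative instance (for example $s = 01010 \mapsto 11000 \mapsto 10$) that the operation creates exactly three reducible adjacencies and that the subsequent normalization produces no further cascading, so the normalized length drops by exactly three.
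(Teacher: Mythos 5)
Your proof is correct and takes essentially the same route as the paper: both perform the operation $\beta(1,1,4,4)$ (exchanging the prefix $s[1]$ with the single-symbol block $s[4]$) and invoke Property~\ref{property21}, whose conditions hold automatically by the alternation of a normalized binary string. Your added caution about the degenerate choice $x=2$, where the conditions of Property~\ref{property21} hold formally yet only one reduction occurs, is a valid observation that the paper glosses over, but it does not alter the argument.
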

\begin{proof}
Let $s$ be a binary normalized string, where $n = 5$. Suppose, $s=ababa$ represents a class of binary normalized strings where either $a=1$ and $b=0$, or $a=0$ and $b=1$. Now, if we exchange $s[1]$ with $s[4]$ ($\beta(1,1,4,4)$), we get a normalized string of $ba$ which is a reduced string from the original one. Here, the length of reduced string is 2 which is 3 less than the original string, i.e., we have a 3-pblockInterchange here. Now, if we append any length of binary normalized string after $s$, then according to Property \ref{property21}, we always get a 3-pblockInterchange if we always perform $\beta(1,1,4,4)$ provided that $n\ge 5$. So, the claim holds true.
\end{proof}

\begin{lemma}
\label{lemma011}
Let $s$ be a fully normalized binary string of length $n$ such that $n\ge 5$ and we always perform prefix block-interchange operations, then $d_{pbi}^g(s) = \lceil \frac{n-2}{3} \rceil$.
\end{lemma}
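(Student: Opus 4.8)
The plan is to establish the claimed value as an equality by proving a matching lower bound and upper bound on $d_{pbi}^g(s)$, both governed by how many symbols a single prefix block-interchange can eliminate. By Property \ref{property1}, grouping $s$ amounts to reducing its length from $n$ down to $2$ (the grouped normalized forms are exactly $01$ and $10$), so the total number of symbols that must be removed is $n-2$. The whole argument therefore reduces to controlling the per-operation reduction.

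For the lower bound I would first argue that no prefix block-interchange can remove more than $3$ symbols from a normalized binary string. A prefix block-interchange $\beta(1,x,y,z)$ rewrites $s$ as the concatenation of at most four blocks $Q\,M\,P\,S$, where $Q=s[y]\ldots s[z]$, $M=s[x+1]\ldots s[y-1]$, $P=s[1]\ldots s[x]$, and $S=s[z+1]\ldots s[n]$; the only places where new reductions can occur are the three block boundaries $QM$, $MP$, and $PS$. Since each block is internally normalized and the alphabet is binary, across any boundary the two sides form an alternating pattern $\ldots bab \ldots$, so merging equal neighbours collapses at most one symbol per boundary and never cascades. Hence a single operation removes at most $3$ symbols, and because $n-2$ symbols must be removed in total, at least $\lceil \frac{n-2}{3}\rceil$ operations are required.

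For the upper bound I would use a greedy strategy. By Lemma \ref{lemma0}, while the current normalized length is at least $5$ a $3$-pblockInterchange is always available, so repeatedly applying it removes $3$ symbols at a time and preserves the length modulo $3$. I would then split into the three residue classes of $n$ modulo $3$: when $n\equiv 2$ the process reaches length $2$ exactly after $(n-2)/3$ operations; when $n\equiv 1$ it stalls at length $4$ (a string $abab$), which a single $2$-pblockInterchange such as $\beta(1,1,4,4)$ finishes; and when $n\equiv 0$ it stalls at length $3$ (a string $aba$), which a single $1$-pblockInterchange finishes. A short check confirms that in every case the operation count equals $\lceil \frac{n-2}{3}\rceil$, matching the lower bound, and the equality follows.

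The main obstacle I anticipate is the lower bound, specifically the claim that a prefix block-interchange can never eliminate more than three symbols. The count of ``three boundaries'' is clean, but I must rule out cascading reductions (where collapsing one equal pair exposes a further equal pair) and carefully treat the degenerate cases in which $M$ or $P$ is empty or a single symbol, since there the same symbol can be absorbed from both sides. Verifying that these degeneracies still respect the at-most-one-per-boundary bound, so that the global cap of $3$ survives, is where the care is needed; everything else follows from Property \ref{property1}, Property \ref{property21}, and Lemma \ref{lemma0}.
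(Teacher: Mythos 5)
Your proposal is correct, and it is strictly more complete than the paper's own proof. The two agree on the upper bound mechanism: both greedily invoke Lemma \ref{lemma0} to remove $3$ symbols per operation while the length is at least $5$, and your endgame check at lengths $4$ and $3$ (a $2$-pblockInterchange such as $\beta(1,1,4,4)$, then a $1$-pblockInterchange) is exactly the residue arithmetic the paper leaves implicit. Where you genuinely diverge is the lower bound: the paper's proof consists of a single sentence citing Lemma \ref{lemma0}, which at best establishes $d_{pbi}^g(s) \le \lceil \frac{n-2}{3} \rceil$; it never argues that fewer operations are impossible, even though the lemma asserts equality. Your three-boundary counting supplies precisely that missing half. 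One refinement to close the step you yourself flagged: rather than claiming merges ``never cascade'' (they can --- e.g.\ a triple $aaa$ arises when the middle block $M$ or the old prefix $P$ has length $1$ and is absorbed from both sides), observe that the reduction equals the number of adjacent equal pairs in the rearranged string $QMPS$, that each such pair must straddle one of the three block boundaries because every block is internally normalized, and hence the reduction is at most $3$ per operation --- a cascade charges two distinct boundaries, never one boundary twice. With that phrasing, combined with Property \ref{property1} (exactly $n-2$ symbols must be removed to group $s$), your lower bound is airtight and the claimed equality follows; this is an argument the paper does not contain at all, so your route buys rigor for the stated equality at the cost of a longer proof.
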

\begin{proof} 
As we always get a 3-pblockInterchange when $n\ge 5$ (lemma \ref{lemma0}), then the length of the string will be reduced by 3 after each operation. Thus, the bound holds true.
 \end{proof}

\subsection{Grouping Binary Strings by Restricted Prefix Block-interchanges}
\label{restrictedpblock}
Now, let us consider a different scenario where we always keep the first block of consecutive
 identical symbol(s) of a string fixed, i.e., we keep the first symbol of a normalized binary string fixed. So, while performing a restricted prefix block-interchange operation we actually mean $\beta(2,2,x,y)$. We refer to this operation as the restricted prefix block-interchange operation.  The motivation for this apparently strange setting comes from the existence of circular genomes. While handling a circular permutation (genome), we need to keep the first position fixed. There exist some works where such circular permutations are considered \cite{mixtacki,solomon}.


\begin{property}
\label{property3}
Let $s$ be a fully binary normalized string of length $n$, where $2 < x \le y <n$. If we have a block $s[2]$ and another block $s[x]\ldots s[y]$ such that $s[2]=s[y+1]$, $s[2] =s[x-1]$, $s[1]=s[x]$ and $s[3]=s[y]$, then we get a 4-pblockInterchange using the restricted prefix block-interchange operations.
\end{property}

For example, let $s=10101010$. Since, $s[2] = s[5+1]$, $s[2]=s[5-1]$, $s[1]=s[5]$ and $s[3]=s[5]$ (i. e., $x=5$ and $y=5$), $\beta(2,2,5,5)$ can be applied on $s$ as follows:  $1 \overline{0}10\underline{1}010=\textbf{111000} 10 = \textbf{10} 10 = 1010$.

\begin{lemma}\label{lemma_binary1}
Let $s$ be a fully normalized binary string of length $n$ such that $n\ge 7$. Then, we always get 4-pblockInterchange(s) using restricted prefix block-interchange operations.
\end{lemma}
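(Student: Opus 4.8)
The plan is to mirror the proof of Lemma~\ref{lemma0}: exhibit a single restricted prefix block-interchange, namely $\beta(2,2,5,5)$, that always satisfies the four conditions of Property~\ref{property3} and hence yields a $4$-pblockInterchange. The whole argument rests on the rigidity of normalized binary strings: since no two adjacent symbols are equal and the alphabet has size two, the symbols must strictly alternate. Writing $a=s[1]$ and $b=s[2]$ (so $b\ne a$), every odd position carries $a$ and every even position carries $b$, i.e. $s[1]=s[3]=s[5]=\cdots=a$ and $s[2]=s[4]=s[6]=\cdots=b$.

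First I would take $x=y=5$, which is admissible whenever $2<5\le 5<n$, i.e. for all $n\ge 7$. With this choice the four conditions of Property~\ref{property3} become $s[2]=s[6]$, $s[2]=s[4]$, $s[1]=s[5]$ and $s[3]=s[5]$; by the alternation observed above all four hold automatically, so Property~\ref{property3} applies directly and guarantees a $4$-pblockInterchange. To make the reduction explicit and confirm it is exactly $4$, I would write out the string produced by $\beta(2,2,5,5)$, namely $s[1]\,s[5]\,s[3]s[4]\,s[2]\,s[6]s[7]\cdots s[n]=aaa\,bbb\,s[7]\cdots s[n]$. The three leading $a$'s collapse to one $a$ and the following three $b$'s collapse to one $b$, saving four symbols; since $s[7]=a\ne b$, no merge propagates into the unchanged suffix $s[7]\cdots s[n]$, which is already normalized. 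Thus the length drops by exactly $4$. As in Lemma~\ref{lemma0}, the base case $n=7$ (where $s=abababa$ reduces to $aba$) establishes the pattern, and because the four conditions only involve positions $1$ through $6$, appending any normalized binary suffix leaves them intact, so the same operation $\beta(2,2,5,5)$ works for every $n\ge 7$.

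The argument has essentially no hard core; the only points requiring care are bookkeeping. The first is ensuring the collapsing runs $aaa$ and $bbb$ do not cascade into an over-reduction of more than $4$, which is settled by the fact that the suffix $s[7]\cdots s[n]$ is normalized and begins with $a$, so it contributes no extra merge. The second is checking that the index choice $x=y=5$ stays within the bound $2<x\le y<n$, which is exactly where the hypothesis $n\ge 7$ is used (guaranteeing $5<n$). I therefore expect no genuine obstacle beyond verifying these two routine points, since the alternating structure of binary normalized strings forces the conditions of Property~\ref{property3} to be met for free.
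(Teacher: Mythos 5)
Your proof is correct and follows essentially the same route as the paper: both arguments rest on the forced alternation of a normalized binary string and then invoke Property~\ref{property3} (with $x=y=5$, i.e.\ the operation $\beta(2,2,5,5)$) to obtain the $4$-pblockInterchange. Your version is in fact slightly more careful than the paper's, since you explicitly expand the result $aaabbb\,s[7]\cdots s[n]$, verify that the reduction is exactly $4$ (no cascade past $s[7]$), and check the index bound $5<n$.
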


\begin{proof} 
As $s$ is a fully normalized binary string, there are only four kinds of such strings: $1010\ldots 10$ of even length, $0101\ldots 01$ of even length, $1010\ldots 101$ of odd length and $0101\ldots 010$ of odd length. So, for all these strings, we always get that the first symbol is same as the fifth symbol, the second symbol is same as the fourth symbol, the fourth symbol is same as the sixth symbol and the third symbol is same as the seventh symbol. Then, according to Property \ref{property3}, we always get a 4-pblockInterchange for each of the four possible combinations (such that $n\ge7$) which reduces the length of the string by 4. 
 \end{proof}

\begin{theorem}\label{thm_binary1}
Let $s$ be a fully binary normalized string of length $n$ such that $n\ge 2$ and we always perform restricted prefix block-interchange operations, then $d_{rpbi}^g(s) = \lceil \frac{n-2}{4} \rceil$, where $d_{rpbi}^g(s)$ is the number of restricted prefix block-interchanges required to group all symbols of the string $s$.
\end{theorem}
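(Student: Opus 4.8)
The plan is to prove the two inequalities $d_{rpbi}^g(s) \le \lceil\frac{n-2}{4}\rceil$ and $d_{rpbi}^g(s) \ge \lceil\frac{n-2}{4}\rceil$ separately, after first pinning down what grouping must achieve. Since $s$ is a fully normalized binary string, its grouped normalized form is either $01$ or $10$, i.e. a string of length $2$; hence grouping $s$ means shrinking its length from $n$ down to $2$, so exactly $n-2$ symbols must be deleted in total (this is the content of Property~\ref{property1}). With this target fixed, the whole argument reduces to controlling how many symbols a single restricted prefix block-interchange can delete.

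For the upper bound I would argue by strong induction on $n$. The base cases are $n\in\{2,3,4,5,6\}$, each dispatched by an explicit restricted prefix block-interchange: $n=2$ is already grouped ($0$ operations); for $n=3$ the move $\beta(2,2,3,3)$ groups the string; for $n=4$ the move $\beta(2,2,3,4)$ reduces it directly to length $2$; and for $n\in\{5,6\}$ the move $\beta(2,2,5,5)$ reduces it directly to length $2$. In every case the number of operations used equals $\lceil\frac{n-2}{4}\rceil$. For the inductive step $n\ge 7$, Lemma~\ref{lemma_binary1} guarantees a $4$-pblockInterchange (concretely $\beta(2,2,5,5)$), which produces a shorter fully normalized binary string of length $n-4\ge 3$; applying the induction hypothesis to it costs $\lceil\frac{n-6}{4}\rceil=\lceil\frac{n-2}{4}\rceil-1$ operations, so the total is $\lceil\frac{n-2}{4}\rceil$. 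Here one checks that the reduced string is again fully normalized and binary, so the recursion is legitimate and bottoms out in the tabulated base cases.

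For the matching lower bound, the key point is that any restricted prefix block-interchange $\beta(2,2,x,y)$ deletes at most $4$ symbols. Writing the rearranged string (before normalization) as the concatenation of the five pieces $s[1]$, $s[x]\ldots s[y]$, $s[3]\ldots s[x-1]$, $s[2]$, and $s[y+1]\ldots s[n]$, each piece is internally alternating, so every pair of equal adjacent symbols in the rearranged string must sit at one of the at most four junctions between consecutive pieces. Since normalization reduces the length by exactly the number of equal adjacent pairs (the reduced length equals the number of runs), the number of deleted symbols is at most the number of such junctions, namely $4$. Because grouping must delete $n-2$ symbols in total and each operation deletes at most $4$ of them, at least $\lceil\frac{n-2}{4}\rceil$ operations are necessary. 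Combining this with the upper bound gives the claimed equality.

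I expect the main obstacle to be the lower bound rather than the construction: the upper bound is essentially routine once Lemma~\ref{lemma_binary1} is in hand and the five base cases $n\le 6$ are tabulated. The delicate step is the ``at most $4$ deletions'' bound, where one must argue that even when the moved block $s[x]\ldots s[y]$ (or the displaced symbol $s[2]$) is a single symbol that becomes equal to both of its new neighbours, the resulting cascade of identical symbols consumes two distinct junctions rather than producing a single junction with two deletions. Making the identity ``deletions $=$ number of equal adjacent pairs $\le$ number of junctions $\le 4$'' precise is where the care is needed; everything else follows by bookkeeping.
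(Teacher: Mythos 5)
Your proposal is correct, and its upper-bound half is essentially the paper's own argument: explicit one-move groupings for the small cases $n<7$ (the paper uses $\beta(2,2,3,3)$ and $\beta(2,2,5,5)$; your $\beta(2,2,3,4)$ for $n=4$ works equally well), followed by repeated application of Lemma~\ref{lemma_binary1} to peel off $4$ symbols per operation, with your induction merely making the paper's iteration formally tidy. Where you genuinely diverge is the lower bound: the paper's proof stops at achievability and never argues that $\lceil\frac{n-2}{4}\rceil$ operations are \emph{necessary}, even though the theorem asserts an equality, not just an inequality. Your junction-counting argument fills exactly this gap: writing the rearranged string as the concatenation of the five internally alternating pieces $s[1]$, $s[x]\ldots s[y]$, $s[3]\ldots s[x-1]$, $s[2]$, $s[y+1]\ldots s[n]$, observing that deletions under normalization equal the number of equal adjacent pairs, and that such pairs can only occur at the at most four junctions, yields the per-operation cap of $4$; combined with Property~\ref{property1} (a total of $n-2$ symbols must be deleted) this gives the matching lower bound. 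The cascading concern you flag is already handled by your run-counting identity (a run spanning two junctions contributes two pairs at two distinct junctions, never two deletions at one junction), so there is no gap. Incidentally, the same junction count explains why the restricted operation beats the unrestricted one: an unrestricted prefix block-interchange creates only three junctions, which is precisely the $\lceil\frac{n-2}{3}\rceil$ bound of Lemma~\ref{lemma011}. In short, your proof is a strict strengthening of the paper's: it proves the stated equality, whereas the paper proves only the upper bound.
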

\begin{proof}
First, we prove the bound for all fully normalized binary strings where $n<7$, then we prove for $n\ge 7$. When $n<7$, we need one prefix block interchange operation to group all symbols in $s$. For this case, length of $s$ can be 3, 4, 5 or 6. We can easily group these strings as follow:
\begin{itemize}
\item $101\xrightarrow{\beta(2,2,3,3)} \textbf{11}0 = 10$
\item $1010\xrightarrow{\beta(2,2,3,3)} \textbf{1100} = 10$
\item $10101 \xrightarrow{\beta(2,2,5,5)} \textbf{11100} = 10$ 
\item $101010\xrightarrow{\beta(2,2,5,5)} \textbf{111000} = 10$. 
\end{itemize}
Note that 1010 and 0101 are in fact similar strings because we get the latter from the former by relabeling `0' to `1' and `1' to `0' as in \cite{amit,khaled}. So, we need one restricted prefix block-interchange to group 010, 0101, 01010 and 010101 which satisfies the proposed bound. Now, when $n \ge 7$, we always get 4-pblockInterchange(s) to perform on $s$ (Lemma \ref{lemma_binary1}), which will reduce the length of $s$ by 4. Thus the proposed bound always holds true.
 \end{proof}
Note that our deduced upper bound to group fully binary normalized string for restricted prefix block-interchanges is better than other results presented in Table \ref{summary-tab}. After grouping is done, we get either 01 or 10, where 01 is already sorted. So, if we get 10 then we treat it as a permutation and need one extra prefix block-interchange operation to sort it. Thus, $d_{pbi}^s(s) = \lceil \frac{n-2}{3} \rceil + 1$, when we need to sort 10, otherwise $d_{pbi}^s(s) =d_{pbi}^g(s) = \lceil \frac{n-2}{3} \rceil$. 

We also present Algorithm \ref{algo1} to compute the restricted prefix block-interchange distance to group fully normalized binary string(s). In Algorithm \ref{algo1}, \emph{swap} and \emph{normalize} operations take constant time \cite{compete}. We can perform the \emph{swap} operation by exchanging one or two symbols with one or two symbols respectively. Similarly, \emph{normalize} operation can be done always by checking first to seventh symbols in $s$. So, the \emph{while} loop will execute not more than $n$ times. Clearly, the time complexity of Algorithm \ref{algo1} is $O(n)$.
\begin{theorem}\label{thm_binary2}
The problem of grouping fully normalized binary strings by restricted prefix block-interchanges can be solved by a linear-time algorithm.
\end{theorem}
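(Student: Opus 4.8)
The plan is to give an explicit description of the algorithm together with an amortized or per-iteration accounting of the work done, and then conclude that the total running time is linear. Since Theorem~\ref{thm_binary1} already establishes that $d_{rpbi}^g(s) = \lceil \frac{n-2}{4} \rceil$ and that each qualifying operation is a 4-pblockInterchange (Lemma~\ref{lemma_binary1}), the essential observation is that the number of restricted prefix block-interchange operations performed is $O(n)$, and that each individual operation can be executed in constant time. The theorem then reduces to bounding the per-operation cost and the cost of the supporting bookkeeping (the \emph{swap} and \emph{normalize} steps).

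First I would set up the loop invariant maintained by Algorithm~\ref{algo1}: at the start of each iteration, $s$ is a fully normalized binary string, so its first seven symbols are determined up to the choice of leading symbol, as catalogued in the proof of Lemma~\ref{lemma_binary1}. This invariant is what lets each step act locally. Within one iteration the algorithm (i) inspects a constant-size prefix (the first seven symbols) to locate the indices $x,y$ guaranteeing a 4-pblockInterchange via Property~\ref{property3}, (ii) performs a \emph{swap} that exchanges at most two symbols with at most two symbols, and (iii) \emph{normalize}s by collapsing the newly created run of adjacent identical symbols. Following~\cite{compete}, both \emph{swap} and \emph{normalize} are constant-time, the latter because only the first few positions can become identical after the restricted prefix block-interchange, so at most a bounded number of comparisons and a single reduction suffice.

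The counting step is then straightforward: by Lemma~\ref{lemma_binary1} every operation (when $n \ge 7$) strictly decreases the length of $s$ by $4$, and the base cases $n < 7$ are dispatched in a single operation as in Theorem~\ref{thm_binary1}. Hence the \emph{while} loop iterates at most $\lceil \frac{n-2}{4} \rceil = O(n)$ times, and with $O(1)$ work per iteration the total running time is $O(n)$. I would be careful to note that the length parameter used to drive the loop condition is updated in constant time after each \emph{normalize}, so maintaining it contributes no extra asymptotic cost.

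The main obstacle I anticipate is justifying rigorously that \emph{normalize} truly costs only constant time per operation rather than potentially linear time. After a restricted prefix block-interchange the reduction is confined to the first seven symbols (the region touched by $\beta(2,2,x,y)$ with $x,y$ drawn from the catalogued positions), so only a fixed window of adjacencies can collapse; the rest of the string remains normalized and untouched. I would make this precise by appealing to the locality of the operation established in Property~\ref{property3} and the fixed-prefix structure of normalized binary strings, thereby confirming that each \emph{normalize} examines only a bounded prefix. Once this locality is secured, the $O(n)$ bound follows immediately from the $O(n)$ iteration count and the $O(1)$ per-iteration cost.
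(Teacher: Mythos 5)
Your proposal is correct and follows essentially the same route as the paper: it analyzes Algorithm~\ref{algo1} by arguing that each \emph{swap} and \emph{normalize} step costs $O(1)$ (the latter because only the first seven symbols can be affected) and that the loop runs $O(n)$ times, yielding linear time. Your treatment is in fact somewhat more careful than the paper's, since you explicitly justify the locality of \emph{normalize} and tighten the iteration count to $\lceil \frac{n-2}{4} \rceil$, whereas the paper simply asserts constant cost per step and a bound of $n$ iterations.
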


\begin{algorithm}[H]

\SetAlgoLined
\KwIn{$s$ is a fully binary normalized string, where length of $s$, $n \ge 3$}
\BlankLine
$count = 0$\;

\While{($n \ge 7$)}
{
	swap($s[2]s[3]$, $s[5]s[6]$)\;
	normalize($s$)\;
	$n=n-4$\;
	$count=count+1$\;
	
}
swap($s[2]$, $s[n-1]$)\;
$count=count+1$\;			
normalize($s$)\;
$d_{rpbi}^g(s)=count$\;
$return$ $d_{rpbi}^g(s)$ \;
\caption{ Algorithm to group fully normalized binary string by restricted prefix block-interchanges}
\label{algo1}
\end{algorithm}

\section{Grouping and Sorting Ternary Strings}\label{section_grpt}
In this section, we deduce both the grouping and sorting distances for ternary strings. It can be noted that grouping ternary strings is not as simple as grouping binary strings. We start with an easy lemma.

\begin{lemma}\label{thm:always1}
In a fully normalized ternary string of length greater than 3, we can always perform a $1$-pblockInterchange.
\end{lemma}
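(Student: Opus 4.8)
The plan is to show that for any fully normalized ternary string $s$ with $n \ge 4$ one can exhibit a single prefix block-interchange $\beta(1,x,y,z)$ that places two equal symbols side by side; after the subsequent reduction of identical adjacent symbols the length drops by at least one, which is precisely a $1$-pblockInterchange. The driving observation is the pigeonhole principle: since the alphabet has only three symbols and $n \ge 4$, some symbol must occur at least twice, and because $s$ is normalized its two occurrences cannot be adjacent. This is also where the hypothesis $n>3$ is essential --- the length-$3$ string $abc$ has all symbols distinct and admits no $1$-pblockInterchange, so the bound is sharp and the pigeonhole step is exactly the place the hypothesis is consumed.

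The mechanism I would exploit is that $\beta(1,x,y,z)$ produces $s[y]\ldots s[z]\,s[x+1]\ldots s[y-1]\,s[1]\ldots s[x]\,s[z+1]\ldots s[n]$ and thereby creates up to three new junctions: between the relocated block $s[y]\ldots s[z]$ and the middle block, between the middle block and the former prefix $s[1]\ldots s[x]$, and between the former prefix and the suffix $s[z+1]\ldots s[n]$. The idea is to choose $x,y,z$ so that one of these junctions brings two equal symbols together. I would organise the argument around the leading symbol $a=s[1]$, splitting into two cases according to whether $a$ repeats.

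In the first case $a=s[1]$ occurs again at some position $q$ with $3\le q\le n$ (it cannot equal $s[2]$ by normalization). Here I would apply $\beta(1,1,2,q-1)$, which yields $s[2]\ldots s[q-1]\,s[1]\,s[q]\ldots s[n]$ and places $s[1]$ immediately before $s[q]$; since $s[1]=s[q]=a$, these merge. In the remaining case $a$ occurs only at position $1$, so $s[2]\ldots s[n]$ uses only the other two symbols and is a normalized binary string of length $n-1\ge 3$; such a string alternates, forcing $s[2]=s[4]$. Then $\beta(1,1,4,4)$ produces $s[4]\,s[2]s[3]\,s[1]\,s[5]\ldots s[n]$, and the junction between $s[4]$ and $s[2]$ merges two equal symbols. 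In either case we obtain the desired $1$-pblockInterchange.

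The part needing the most care --- and the main obstacle --- is verifying that the chosen operation is a legitimate prefix block-interchange (that is, $1\le x<y\le z\le n$) in all boundary situations, especially for small $n$ and for $q$ near the ends of the string, and checking that the two cases are genuinely exhaustive. I would confirm, for instance, that $q\ge 3$ guarantees $y=2\le z=q-1$ in the first case and that $n\ge 4$ guarantees $z=4\le n$ in the second, and I would check $n=4,5$ explicitly to ensure that no degenerate empty block spoils the intended merge.
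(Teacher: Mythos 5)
Your proof is correct, and it is actually more complete than the paper's own argument. The paper's proof fixes a prefix $s[1]\ldots s[l]$ and simply asserts that ``somewhere'' there is a block $s[i]\ldots s[j]$ satisfying one of three adjacency conditions ($s[1]=s[i-1]$, $s[j]=s[l+1]$, or $s[l]=s[j+1]$); it never actually proves this existence claim, which is the whole content of the lemma. Your two-case analysis supplies exactly the missing argument: when the leading symbol $a$ recurs at some position $q\ge 3$, your operation $\beta(1,1,2,q-1)$ realizes the paper's Condition 3 (with $l=1$, $j=q-1$, so that $s[l]=s[j+1]$ means $s[1]=s[q]$); when $a$ does not recur, the tail $s[2]\ldots s[n]$ is a normalized string over two symbols, hence alternating, and $\beta(1,1,4,4)$ realizes Condition 2 (namely $s[4]=s[2]$). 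Your case split also pinpoints where the hypothesis $n>3$ is used (position $4$ must exist in the second case), and your sharpness remark about length-$3$ strings is a nice addition. One pedantic point: the paper defines an $l$-pblockInterchange as a reduction by \emph{exactly} $l$, while you only claim a drop of ``at least one''; in fact in both of your cases the drop is exactly one (in Case 1, $s[q-1]\ne a$ and $s[q+1]\ne a$ by normalization; in Case 2, the merged symbol's only neighbour is $s[3]\ne s[2]$, and $s[3]$, $s[5]$ differ from $a$ since $a$ occurs only at position $1$), so nothing is lost, but stating this explicitly would make the argument airtight under the paper's definition.
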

\begin{proof}
Consider a normalized ternary string $s$ of length $n>3$. So, $s=s[1]\ldots s[l]\ldots s[i]\ldots s[j]\ldots s[n]$, here it is obvious that $1\le l < i \le j < n$. Now, we take $s[1]\ldots s[l]$ as a prefix. As this is a normalized ternary string, somewhere we will find a block $s[i]\ldots s[j]$ such that any one of the following conditions (1-3) is satisfied; and consequently, we get a 1-pblockInterchange.
\begin{itemize}
\item{Condition 1: $s[1]=s[i-1]$}
\item{Condition 2: $s[j]=s[l+1]$}
\item{Condition 3: $s[l]=s[j+1]$}
\end{itemize} 
Since one of the above cases always occurs for fully ternary strings when $n>3$, the result follows.  \end{proof}

Now, we assume that we can perform a $2$-pblockInterchange on a string. Note that, instead of taking a prefix of length at least 2, we can always take a prefix of length 1 (i.e., the first symbol), and perform a 1-pblockInterchange. Thus, the presence of a  2-pblockInterchange always ensures that there is also a 1-pblockInterchange. For example, let's consider 10201. There is a $2$-pblockInterchange: $\overline{1}02\underline{0}1 \Rightarrow \textbf{00}2\textbf{11} \Rightarrow 021$. We could also get a 1-pblockInterchange as follows: $\overline{1}\underline{020}1 \Rightarrow 020 \textbf{11} \Rightarrow 0201$.


The lower bound for the \emph{grouping} of a ternary string remains the same as that of binary strings; but, as can be seen from Theorem~\ref{thm:ternaryBound} below, the upper bound differs. At first, we give an easy but useful lemma.

\begin{lemma}\label{lemma_2Trans}
Suppose $s[1]\ldots s[n]$ is a fully normalized ternary string. If we have a prefix $s[1]\ldots s[i], 1 \leq i<j \leq k < n$ such that $s[1] = s[j-1]$ and $s[i] = s[k+1]$, then we have a 2-pblockInterchange.
\end{lemma}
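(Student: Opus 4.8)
The plan is to exhibit a single explicit operation, the prefix block-interchange $\beta(1,i,j,k)$, and to show that the two hypothesised equalities force two cancellations when the result is normalized. By the definition of a prefix block-interchange, $\beta(1,i,j,k)$ rewrites $s$ as
\[
s[j]\ldots s[k]\; s[i+1]\ldots s[j-1]\; s[1]\ldots s[i]\; s[k+1]\ldots s[n],
\]
so the only newly created adjacencies, and hence the only positions where a reduction can arise, are the three junctions $s[k]s[i+1]$, $s[j-1]s[1]$ and $s[i]s[k+1]$.

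First I would read off the two guaranteed merges. Since $s[1]=s[j-1]$, the junction $s[j-1]s[1]$ is a pair of equal symbols and collapses to one symbol under normalization; likewise $s[i]=s[k+1]$ makes the junction $s[i]s[k+1]$ collapse. Each collapse deletes exactly one symbol, so it remains only to argue that the two collapses act on disjoint positions and set off no further chain reaction, which would then give a net length drop of $2$, i.e. the claimed $2$-pblockInterchange. The key tool here is that $s$ is fully normalized. The two junctions sit in different parts of the rewritten string (one at the boundary between the relocated middle block and the moved prefix, the other between the moved prefix and the untouched suffix), so they involve four distinct positions and the reductions are independent. To rule out a cascade after $s[j-1]$ and $s[1]$ merge into a single symbol $a$, I would inspect its new neighbours: on the left $s[j-2]\neq s[j-1]=a$ and on the right $s[2]\neq s[1]=a$, both by normalization of $s$; the same check for the merge of $s[i]$ with $s[k+1]$ (neighbours $s[i-1]$ and $s[k+2]$) shows nothing further collapses. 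Hence the normalized length decreases by exactly $2$.

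The step I expect to need the most care is the degenerate configuration in which the middle block $s[i+1]\ldots s[j-1]$ is empty, that is $j=i+1$: there $s[j-1]$ coincides with $s[i]$, the junction $s[j-1]s[1]$ no longer appears, and only one of the two merges survives. I would therefore make explicit that the hypothesis $s[1]=s[j-1]$ is to be read with $s[j-1]$ strictly interior to the gap, equivalently $j\ge i+2$; under this reading the argument above goes through unchanged. Finally I would remark, in line with the worked example $10201$ preceding the statement, that a further incidental equality $s[k]=s[i+1]$ at the remaining junction can only increase the reduction and never threatens the guaranteed drop of $2$.
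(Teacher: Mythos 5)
Your proposal is correct and follows essentially the same route as the paper's proof: apply $\beta(1,i,j,k)$ and observe that the hypotheses $s[1]=s[j-1]$ and $s[i]=s[k+1]$ force the corresponding new junctions to collapse under normalization. You are, however, more careful than the paper on two counts, and both are worth keeping. First, the degenerate case you flag is genuine: when $j=i+1$ the paper's assertion that ``$s[1]$ will be adjacent to $s[j-1]$'' is simply false, and the lemma as literally stated fails for that instantiation --- for example $s=10121$ with $i=1$, $j=2$, $k=4$ satisfies the hypotheses, yet $\beta(1,1,2,4)$ produces $01211 \rightarrow 0121$, a drop of only $1$ (the string $10121$ does admit a $2$-pblockInterchange, but via other parameters); so your reading $j\ge i+2$ is exactly the repair needed, and the paper's own proof silently skips it. Second, your cascade check pins the drop to exactly $2$, which the paper never argues. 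One small correction to your own write-up: when $i=1$ (a case the paper relies on heavily, e.g.\ prefixes of length one in the grouping theorem for ternary strings) the two junctions do not occupy ``four distinct positions'' --- they share the position of $s[1]$, and the neighbours you inspect ($s[2]$ to the right of the first merge, $s[i-1]$ to the left of the second) are not the actual neighbours in the rewritten string; instead $s[j-1]\,s[1]\,s[k+1]$ becomes a run of three equal symbols, which normalizes to a single symbol and still gives a drop of exactly $2$, so your conclusion is unaffected.
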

\begin{proof}
We have a fully normalized ternary string $s$. After performing a prefix block-interchange $\beta(1,i,j,k)$ on $s$, $s[1]$ will be adjacent to $s[j-1]$ and $s[i]$ will be adjacent to $s[k+1]$. Then, we will be able to eliminate one of $s[1]$ or $s[j-1]$ and one of $s[i]$ or $s[k+1]$. This ensures that the length of $s$ will be reduced by 2 which completes the proof. 
\end{proof}

%


\begin{theorem}\label{thm:ternaryBound}
\emph{(Grouping ternary strings applying prefix block-interchanges)}
Let $s$ be a fully normalized ternary string. Then, $d_{pbi}^{g}(s) \leq \lceil \frac{n-2}{2} \rceil $ where $n$ is the length of the string.
\end{theorem}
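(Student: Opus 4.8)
```latex
The plan is to establish the upper bound $d_{pbi}^{g}(s) \leq \lceil \frac{n-2}{2} \rceil$ by exhibiting a constructive strategy that guarantees we can always eliminate (at least) two symbols per prefix block-interchange until the string is grouped. The natural engine for this is Lemma~\ref{lemma_2Trans}: whenever we can locate indices $1 \leq i < j \leq k < n$ with $s[1] = s[j-1]$ and $s[i] = s[k+1]$, we obtain a $2$-pblockInterchange that reduces the normalized length by $2$. So the entire argument reduces to showing that, for a fully normalized ternary string that is not yet grouped (i.e.\ $n$ is large enough that grouping is incomplete), such a pair of matching boundary conditions can always be found. Since there are only three symbols, a pigeonhole/counting argument on the positions of the prefix symbol $s[1]$ and the end-of-prefix symbol $s[i]$ should force the required coincidences.

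The key steps, in order, would be: first, handle the small base cases directly (e.g.\ $n \leq 3$, or the short strings where one operation trivially groups everything) so that the ceiling bound is verified at the boundary; this mirrors the base-case treatment in Theorem~\ref{thm_binary1}. Second, for the general case I would argue that whenever $s$ is not yet fully grouped, I can choose the prefix length $i$ (recall from the remark after Lemma~\ref{thm:always1} that prefixes of length $\geq 2$ are available) so that both $s[1]$ appears again at some interior position $j-1$ and $s[i]$ reappears at some position $k+1$ to its right, with the ordering $i < j \leq k$ respected; invoking Lemma~\ref{lemma_2Trans} then yields a guaranteed $2$-pblockInterchange. Third, I would iterate: each such operation strictly decreases the normalized length by $2$, so after $\lceil \frac{n-2}{2} \rceil$ operations the string is reduced to length $2$ (one symbol of each of two types, already grouped), giving exactly the claimed bound. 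I would present the iteration as a descent on $n$, exactly parallel to how Lemma~\ref{lemma011} is derived from the per-operation reduction.

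The main obstacle I anticipate is the second step: proving that the two boundary-matching conditions of Lemma~\ref{lemma_2Trans} can \emph{always} be satisfied simultaneously for an ungrouped ternary string. Unlike the binary case, where the alternating structure $(10)^+$ makes the matching positions completely explicit (as exploited in Lemma~\ref{lemma_binary1}), a ternary string can have irregular symbol patterns, so I would need a careful case analysis on the identity of $s[1]$ and on where its next occurrence and the next occurrence of the chosen $s[i]$ fall. The delicate point is guaranteeing the index inequality $i < j \leq k < n$ can be met by an appropriate choice of $i$; if a naive choice of prefix fails the ordering constraint, I would show that extending or shrinking the prefix by one symbol (toggling between a length-$1$ and length-$2$ prefix) repairs it, using the fact that adjacent symbols in a normalized string always differ. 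A secondary subtlety is the parity/ceiling bookkeeping: when $n$ is odd, the final residue may have length $3$ requiring one last operation, and I must confirm this is absorbed correctly by the ceiling so that the count is $\lceil \frac{n-2}{2} \rceil$ rather than one more.
```
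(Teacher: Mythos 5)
Your overall skeleton (verify small cases, then guarantee a length-2 reduction at each step, then count) matches the paper's, but the load-bearing step of your plan is false: the hypotheses of Lemma~\ref{lemma_2Trans} \emph{cannot} always be satisfied for an ungrouped fully normalized ternary string, no matter how the prefix length $i$ is chosen. The lemma needs a position $j-1>i$ with $s[j-1]=s[1]$ (the adjacency between $s[j-1]$ and $s[1]$ only exists when the middle block $s[i+1]\ldots s[j-1]$ is nonempty) and a later position $k+1$ with $s[k+1]=s[i]$. Now consider the family $1(02)^{+}$ and $1(02)^{+}0$, e.g.\ $s=1020202$: the symbol $s[1]=1$ occurs nowhere else, so $s[1]=s[j-1]$ forces $j=2$, hence $i=1$, and then $s[k+1]=s[i]=1$ is unsatisfiable; and in the degenerate case $j=i+1$ the two blocks are adjacent, the middle part vanishes, and the lemma's conclusion itself fails (for this string such a move creates no coincidence at all). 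These strings exist at every length $n\ge 4$, so no pigeonhole argument on occurrences of $s[1]$ and $s[i]$ can rescue the claim, and your proposed repair (toggling between prefixes of length $1$ and $2$) does not help, since the obstruction is the absence of any second occurrence of $s[1]$. Such strings do admit $2$-pblockInterchanges, but only through the \emph{other} pair of created adjacencies, $s[k]=s[i+1]$ together with $s[i]=s[k+1]$: for $1020202$, take prefix $10$, middle $2$, block $s[4]s[5]=02$, giving $0221002 \rightarrow 02102$. Any correct proof must incorporate this second pattern; restricting attention to Lemma~\ref{lemma_2Trans} alone is a genuine gap, not merely a delicate case.

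For contrast, the paper does not attempt a uniform argument. It verifies the bound exhaustively for $n\le 6$ (supplementary tables, with Lemma~\ref{thm:always1} supplying the $1$-pblockInterchanges that the ceiling absorbs), and for $n\ge 7$ it reduces by relabeling to strings having one of four length-4 prefixes ($1010$, $1012$, $1020$, $1021$), enumerates the length-7 strings for each prefix, argues each admits \emph{some} $2$-pblockInterchange (in the $1020$ case this in fact requires the adjacency pattern above, not literally Lemma~\ref{lemma_2Trans}, so even the paper's citation is loose there), and argues the property persists when a suffix is appended, then recurses down to the verified base cases. So to complete your proof you would need either this kind of prefix-by-prefix case analysis or a new structural lemma covering both adjacency patterns. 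One smaller correction: a fully ternary string groups to length $3$, not $2$, so your terminal bookkeeping should end at length $3$ (or $4$ followed by one final $1$-pblockInterchange, which the ceiling absorbs, as you suspected).
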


\begin{proof}
At first, we will prove the bound for all normalized ternary strings where $n<7$. Then, we will describe the proof for all normalized ternary strings where $n\geq 7$. In what follows, we only consider strings starting with $1$. This does not lose the generality since we can always use relabeling for strings starting with $0$ or $2$. 
As strings are fully ternary, we do not need to work with $n\leq 3$. For $n = 4$, we require a 1-pblockInterchange and that is optimal, hence $d_{pbi}^{g}(s) = 1$, and the upper bound is satisfied. We give all fully normalized ternary strings of length 4 starting with 1 in List (\ref{List-Pref}). When $n=5$, we can always perform either 1-pblockInterchange or 2-pblockInterchange and thus $d_{pbi}^{g}(s) \le 2$. Now, we apply prefix block-interchanges for $n=6$, we always get $d_{pbi}^{g}(s) \leq 2$. Supplementary Table \ref{tab1} shows the derivation of the grouping distance for fully normalized ternary strings of length 5 and 6. It is easy to realize that, by Lemma~\ref{thm:always1}, we can always satisfy the given upper bound. Thus the upper bound is proved for $n<7$. 

\begin{equation}\label{List-Pref}
1012, 1010, 1021, 1020, 1201, 1202, 1210 \mathrm{~and~} 1212.
\end{equation}




Now we consider $n\geq7$. Note carefully that for any string starting with $1$, we can only have one of the eight prefixes of length 4 from List (\ref{List-Pref}).


For a string of length $n$, if we can give a 2-pblockInterchange, the resulting reduced string will start with its previous starting symbol and it may be 1, 0 or 2. For the latter two cases (0 and 2), we can again use relabeling as mentioned before. Therefore we can safely assume that the reduced string will have any of the 8 prefixes of List~(\ref{List-Pref}). Hence, it suffices to prove the bound considering each of the prefixes of List~(\ref{List-Pref}). We will show the lists for 7 length fully normalized ternary strings each of which will have a 4 length prefix from List~(\ref{List-Pref}). Then we will provide arguments for each of the prefixes from List~(\ref{List-Pref}). 

Firstly, it is easy to note that the prefixes 1010 and 1212 themselves are binary strings. So, they take one prefix block-interchange operation. Therefore, we can safely exclude one of them from the following discussion. Now, if we relabel $``0"$ to $``2"$ and $``2"$ to $``0"$, then we get 1201 from 1021. Similarly, after relabeling we get 1202 from 1020 and 1210 from 1012. So, we can safely exclude these also from the following discussion. In what follows, when we refer to the prefixes of List~(\ref{List-Pref}), we would actually mean all the prefixes excluding 1201, 1202, 1210 and 1212. 
%
\subsection*{1010}
We first give all possible strings of length 7 having prefix $1010$ in List (\ref{List-Pref1}). The possible 5 length strings having this prefix are $10101$ and $10102$. 

%

\begin{equation}\label{List-Pref1}
1010101, 1010102, 1010120, 1010121, 1010201, 1010202, 1010210 \mathrm{~and~} 1010212.
\end{equation}

In List (\ref{List-Pref1}), all strings will have at least one $2$-pblockInterchange. When we have a prefix symbol $1$, we need to lookup somewhere for a symbol $1$ so that we can perform a $2$-pblockInterchange. Similarly, when we have a prefix block $10$, we need to lookup somewhere for a suffix block $1\ldots 0$. In that case, we will be able to perform a $2$-pblockInterchange. This is also true for longer prefix blocks. 

Let us check this with examples. The string $1010121$ satisfies the bound as follows: $\overline{1}01\underline{0}121 \Rightarrow 0121 \Rightarrow \overline{01}\underline{2}1 \Rightarrow 201$. We see that, for $n=7$, we need only two prefix block-interchanges holding the bound true. However, if we can reduce the string length by at least $1$ after each step, we will also achieve the desired bound. Here, all strings satisfy the bound. Let us see another example for string 1010212. The reduction steps are as follows: $\overline{1}01\underline{0}212 \Rightarrow 01212 \Rightarrow \underline{0}12\overline{1}2 \Rightarrow 1202 \Rightarrow \underline{12}\overline{0}2 \Rightarrow 012$. This takes three steps which satisfies the bound. Now, if we add 0 or 1 after 1010212, then the resulting string will be 10102120 or 10102121 in which case we will be able to perform a $2$-pblockInterchange. So, appending any length of ternary string(s) after 1010212 will also have $2$-pblockInterchange in the reduction steps.


\subsection*{1012}
Here, again we first give the list of 7 length strings having prefix 1012 (see List (\ref{fig2})). Here, the possible 5 length strings of this prefix are $10120$ and $10121$.

\begin{equation}\label{fig2}
1012010, 1012012, 1012020, 1012021, 1012101, 1012102, 1012120 \mathrm{~and~} 1012121.
\end{equation}

In this case, we see that all strings having length 7, will have a $2$-pblockInterchange  as follows. When we have a prefix block $10$, we need to lookup somewhere for a block $1\ldots0$ in the latter part of that string. Similarly, when we have a prefix block $1012$, we need to lookup for a block $1\ldots2$ in the latter part of that string. Then we will be able to perform a $2$-pblockInterchange which will reduce the string length by 2. Then according to Lemma \ref{thm:always1}, we will get at least two consecutive $1-$pblockInterchange which will satisfy the bound. This is also true for higher length prefix blocks.

\subsection*{1020}
We present all strings of 7 length having prefix $1020$ in List (\ref{fig3}). 

\begin{equation}\label{fig3}
1020101, 1020102, 1020120, 1020121, 1020201, 1020202, 1020210 \mathrm{~and~} 1020212.
\end{equation}

Here, according to Lemma \ref{lemma_2Trans}, all strings will have a $2$-pblockInterchange. When we have a prefix block $102$, we need to lookup for a block $1\ldots2$ in the latter part of that string. Similarly, when we have a prefix block $1..0$, we need to lookup for a block $1\ldots 0$ in the latter part of that string so that after applying a prefix block-interchange, we get a $2$-pblockInterchange. Then, according to Lemma \ref{thm:always1}, the bound will be satisfied. Thus, if we append any length of ternary string(s) after 7 length strings having prefix $1020$, that will also have $2$-pblockInterchange.


\subsection*{1021}
We present all strings of 7 length having prefix $1021$ in List (\ref{fig4}). 
\begin{equation}\label{fig4}
1021010, 1021012, 1021020, 1021021, 1021201, 1021202, 1021210 \mathrm{~and~} 1021212.
\end{equation}
If we observe the 7 length strings carefully based on Lemma \ref{lemma_2Trans}, we find that all strings have at least one $2$-pblockInterchange. Then it will be reduced to 5 length ternary strings in one step. According to Lemma \ref{thm:always1}, in the next two consecutive steps, these 5 length ternary strings will be reduced to 3 length strings after performing a $1$-pblockInterchange appropriately in each step. As a result, the bound holds true. Through the reduction steps, if we can perform a single $2$-pblockInterchange, the bound will hold true. Thus, if we append any length of ternary string after these 7 length normalized ternary strings, the bound always holds true.

This completes the proof.
 \end{proof}

\begin{algorithm}[H]

\SetAlgoLined
\KwIn{$s$, a fully normalized ternary string}
\BlankLine
$count \Leftarrow 0$\;
$twoBlockInterchangeDone  \Leftarrow false$\;
\While{$  \mid s\mid >$3}
{
	\For {$i=1; i<\mid s\mid; i=i+1$}	
	{
		take the first symbol of input string \;
		take the $i^{th}$ symbol of the input string \;
		$firstblock \Leftarrow s[1]s[i]$ \;
		check whether this string is a substring of the current suffix \;
		\For{$j=i+1; j< (\mid s\mid-1); j=j+1$}
		{
			\For{$k=j+1;k<(\mid s \mid -1); k=k+1$}
			{
				$secondblock\Leftarrow s[k]s[k+1]$ \;
				\If{$firstblock==secondblock$}
				{
					$s \Leftarrow$ perform a $2$-pblockInterchange \;
					$count  \Leftarrow count + 1$ \;
					$twoBlockInterchangeDone  \Leftarrow true$ \;
					$break$ \;
				}
			}
		}
		\If {$twoBlockInterchangeDone == false$}	
		{			
			$s \Leftarrow$ perform a $1$-pblockInterchange \;
			$count  \Leftarrow count + 1$ \;
   		}	
	}
	$twoBlockInterchangeDone  \Leftarrow false$ \;
}
\caption{Algorithm to group fully normalized ternary string by prefix block-interchanges (\emph{s}:input string)}
\label{alg12}
\end{algorithm}

We present Algorithm \ref{alg12} to group fully normalized ternary strings in polynomial time. Here, $\mid s\mid$ denotes the length of the string $s$. The outer \emph{while} loop iterates until all the symbols of the string $s$ are grouped. We make the string $firstblock$ apppending the symbol of the first position and the symbol of the $i^{th}$ position. Then, we make another string $secondblock$ appending the symbol of the $k^{th}$ position and $(k+1)^{th}$ position such that $i<j<k$. Now, if $firstblock$ is equal to $secondblock$ then we actually get that $s[1]$ is equal $s[k]$ and $s[i]$ is equal to $s[k+1]$ which are basically the conditions of Lemma \ref{lemma_2Trans}. So, according to Lemma \ref{lemma_2Trans}, we will get a $2$-pblockInterchange here and thus the length of the string will be reduced by 2. If a $2$-pblockInterchange is not found, we will perform a $1$-pblockInterchange which is always possible according to Lemma \ref{thm:always1} and thus the length of the string will be reduced by 1. When all symbols are grouped, the length of the string will be 3 for fully ternary normalized string and so the \emph{while} loop will terminate. Clearly, our presented algorithm (Algorithm \ref{alg12}) will take $O(n^4)$ time to execute, where $n$ is the length of the string.

We achieve the same bound as in Theorem \ref{thm:ternaryBound}, if we want to group normalized ternary strings applying restricted prefix block-interchanges. In supplementary Table \ref{tab002}, we show the derivations of the grouping distance for normalized ternary strings of length 5 and 6 applying restricted prefix block-interchanges.

\begin{theorem}
\emph{(Sorting distance for ternary strings)}
Let $s$ be a fully normalized ternary string. Then, an upper bound for sorting ternary string is $d_{pbi}^{s}(s)\leq \lceil \frac{n-2}{2}\rceil + 2$ where $n$ is the length of the string.
\end{theorem}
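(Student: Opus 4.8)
The plan is to mirror the two-phase argument already used for binary strings in Section~\ref{section_grp}: first \emph{group} the string, and then \emph{sort} the grouped arrangement into ascending order $012$. By Theorem~\ref{thm:ternaryBound}, grouping a fully normalized ternary string $s$ of length $n$ costs at most $\lceil \frac{n-2}{2}\rceil$ prefix block-interchanges. The crucial observation is that, because $s$ is \emph{fully} ternary, it contains all three symbols $0,1,2$, and grouping only collects identical symbols together without ever removing a symbol from the string. Hence, once grouping is complete, the normalized result has length exactly $3$ and is a permutation of $\{0,1,2\}$. It therefore suffices to bound, by the constant $2$, the number of prefix block-interchanges needed to sort an arbitrary length-$3$ permutation to $012$; adding this to the grouping cost yields $d_{pbi}^{s}(s)\le \lceil \frac{n-2}{2}\rceil + 2$.

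To carry out the sorting step, I would first enumerate the effect of a single prefix block-interchange $\beta(1,x,y,z)$ on a length-$3$ string $s[1]s[2]s[3]$. For $n=3$ the constraint $1\le x<y\le z\le 3$ leaves only four admissible triples, giving the four images $s[2]s[1]s[3]$, $s[2]s[3]s[1]$, $s[3]s[2]s[1]$, and $s[3]s[1]s[2]$. Applying this to the identity $012$ shows that exactly the four permutations $102,120,210,201$ are reachable in one operation; the only permutations \emph{not} appearing are $012$ itself and $021$. A short check on $021$ then shows it reaches $012$ in two steps, for instance $021\to 201\to 012$. Thus every length-$3$ permutation can be sorted in at most two prefix block-interchanges, with $021$ being the unique worst case requiring exactly two, and all other non-identity permutations requiring exactly one. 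This plays the same role here that the single extra operation for $10$ played in the binary case.

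Combining the two phases, the group-then-sort strategy realizes a total of at most $\lceil \frac{n-2}{2}\rceil + 2$ operations, which upper-bounds the minimum $d_{pbi}^{s}(s)$ since it exhibits one concrete sorting strategy. I do not expect a genuine obstacle in this argument: the only real content is the finite case analysis over the six permutations of $\{0,1,2\}$, namely verifying that the four single-step images listed above are the complete list for $n=3$ and confirming that no permutation needs three steps. The one point that must be stated carefully, rather than assumed, is that the grouped string always has length exactly $3$ (never fewer), which is precisely where the full-ternary hypothesis is used; without it the additive constant could differ.
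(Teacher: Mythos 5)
Your proposal is correct and follows essentially the same route as the paper's proof: group in at most $\lceil \frac{n-2}{2}\rceil$ operations via Theorem~\ref{thm:ternaryBound}, observe that the grouped normalized string is one of the six permutations of $\{0,1,2\}$, and check that each of $102,120,210,201$ sorts in one further operation while $021$ sorts in two. Your explicit enumeration of the four admissible triples $(x,y,z)$ for $n=3$ is somewhat more detailed than the paper, which simply asserts these operation counts, but the argument is the same.
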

\begin{proof}
After grouping a ternary string, we can have one of the following grouped strings: $012, 021, 102, 120, 201$ and $210$. Among these, $012$ is already sorted. We need one more prefix $0$-pblockInterchange to sort $210,102,120$ and $201$. We need two more prefix $0$-pblockInterchanges to sort $021$. Hence the result follows.
 \end{proof}

\subsection{Classification}\label{sec:class}
In this section, following the trend of \cite{amit,khaled}, we will briefly identify two classes of fully normalized ternary strings. We have already shown in Theorem \ref{thm:ternaryBound} that it needs at most $\lceil \frac{n-2}{2}\rceil$ prefix block-interchange operations to group fully normalized ternary strings. We will discuss two classes; the first one consists of the fully normalized ternary strings that require $\lceil \frac{n-3}{2}\rceil$ or $\lceil \frac{n-2}{2}\rceil$ prefix block-interchange operations (Class 1) and the other comprising strings that need roughly $\frac{n}{3}+1$ or less operations (Class 2). We are not able to classify all the fully normalized ternary strings; however we provide below a classification of a large set thereof comprising infinite number of strings:
\begin{itemize}
\item All strings of length 4 and 6 satisfy the bound $\lceil \frac{n-3}{2}\rceil$, so we put them in Class 1 (see supplementary Table \ref{tab1}). Ternary strings reduced to any of these by a series of 1-pblockInterchange or 2-pblockInterchange operations will also belong to Class 1. Ternary strings reduced to any of these by a series of 3-pblockInterchange or 4-pblockInterchange will belong to Class 2.

\item $(10)^{+}210$, $(12)^{+}012$, $120(10)^{+}$, $102(12)^{+}$, $102(10)^{+}$ belong to Class 1. We can apply relabeling on these to find strings starting with 0 or 2 and they also belong to Class 1. If a ternary string is reduced to any one of the previous strings by a series of consecutive $2$-pblockInterchange, then that particular string also belongs to Class 1.

\item $12(102)^{+}$, $12(1020)^{+}$, $10(21)^{+}$, $10(12)^{+}$ belong to Class 2. We can apply one 0-pblockInterchange operation on these strings to get $(102)^{+}12$, $(1020)^{+}12$, $(21)^{+}10$, $(12)^{+}10$, respectively. Now we see that prefix part in each of these strings is binary. So, we can apply 3-pblockInterchange or 4-pblockInterchange operations which will roughly require $\frac{n}{3}+1$ or less operations. We can apply relabeling on those to find strings starting with 0 or 2 which belong to Class 2.
\end{itemize}

\section{Conclusion}\label{conclusion}
In this paper, we have discussed grouping and sorting of fully binary and normalized ternary strings applying prefix block interchanges. In particular we have deduced that, for binary strings the grouping distance is $d_{rpbi}^g(s)=\lceil \frac{n-2}{4}\rceil$ using restricted prefix block-interchanges, which is better than existing results on binary strings applying other genome rearrangement operations \cite{amit,hurkens,khaled,compete}. In addition, we have also provided a linear-time algorithm to sort fully normalized binary strings by restricted prefix block interchanges. We have deduced upper bounds applying prefix block-interchanges for grouping and sorting binary normalized strings as $d_{pbi}^s(s) = \lceil \frac{n-2}{3} \rceil + 1$, when we need to sort 10, otherwise $d_{pbi}^s(s) =d_{pbi}^g(s) = \lceil \frac{n-2}{3} \rceil$. We have also deduced upper bounds for both grouping and sorting normalized ternary strings as $d_{pbi}^{g}(s)\leq \lceil \frac{n-2}{2}\rceil$ and $d_{pbi}^{s}(s)\leq \lceil \frac{n-2}{2}\rceil + 2$ respectively. We have also presented a $O(n^4)$ time algorithm to group normalized ternary strings, where $n$ is the length of the string. 





\begin{appendices}

\section*{Supplementary Tables}

\begin{table}[h!]
\centering
 \begin{tabular}{ | c | l | c| }
   \hline
   $n$ & derivation &  $d_{pbi}^{g}(s)$\\ \hline
   \multirow{7}{*}{5} & $10102\xrightarrow{\beta(1,1,4,4)} \textbf{0011}2 = 012$ & 1\\ 
		& $10120\xrightarrow{\beta(1,2,4,4)} 2\textbf{1100} = 210$ & 1\\
		& $10121\xrightarrow{\beta(1,1,4,4)} 20\textbf{111} = 201$ & 1\\
		& $10201\xrightarrow{\beta(1,1,3,4)} 2\textbf{0011} = 201$ & 1\\
 		& $10202\xrightarrow{\beta(1,1,4,4)} \textbf{00}212 = 0212 \xrightarrow{\beta(1,1,4,4)} \textbf{22}10= 210$ & 2\\
		& $10210\xrightarrow{\beta(1,1,5,5)} \textbf{00}2\textbf{11} = 021$ & 1\\
		& $10212\xrightarrow{\beta(1,2,5,5)} \textbf{2211}0 = 210$ & 1\\ \hline
   \multirow{14}{*}{6} & $101012\xrightarrow{\beta(1,1,4,4)} \textbf{00111}2 = 012$ & 1\\
		& $101021\xrightarrow{\beta(1,1,4,5)} 02\textbf{0111} = 0201\xrightarrow{\beta(1,1,4,4)} 12\textbf{00}= 120$ & 2\\   
		& $101201\xrightarrow{\beta(1,1,4,5)} 2\textbf{00111} = 201$ & 1\\ 
		& $101202\xrightarrow{\beta(1,1,4,5)} 2\textbf{0011}2 = 2012\xrightarrow{\beta(1,1,3,3)} 10\textbf{22}=102$ & 2\\
		& $101210\xrightarrow{\beta(1,2,4,5)} 2\textbf{111}\textbf{00} = 210$ & 1\\
		& $101212\xrightarrow{\beta(1,2,4,5)} 2\textbf{111}02 = 2102 \xrightarrow{\beta(1,1,3,3)} 01\textbf{22}=012$ & 2 \\
		& $102010\xrightarrow{\beta(1,1,6,6)} \textbf{00}20\textbf{11} = 0201 \xrightarrow{\beta(1,1,4,4)} 12\textbf{00}= 120$ & 2\\
		& $102012\xrightarrow{\beta(1,2,6,6)} \textbf{22} 0\textbf{11}0= 2010 \xrightarrow{\beta(1,2,3,3)} 12\textbf{00}= 120$ & 2\\
		& $102020\xrightarrow{\beta(1,2,5,5)} \textbf{22}01\textbf{00} = 2010 \xrightarrow{\beta(1,2,3,3)} 12\textbf{00}= 120$ & 2\\
		& $102021\xrightarrow{\beta(1,2,4,4)} 2020\textbf{11} = 20201\xrightarrow{\beta(1,1,3,3)} \textbf{0022}1=021$ & 2 \\
		& $102101\xrightarrow{\beta(1,1,5,5)} \textbf{00}2\textbf{111} = 021$ & 1\\
		& $102102\xrightarrow{\beta(1,1,5,5)} \textbf{00}2\textbf{11}2 = 0212 \xrightarrow{\beta(1,2,3,3)} 10\textbf{22}= 102$ & 2\\
		& $102120\xrightarrow{\beta(1,2,5,5)} \textbf{221100} = 210$ & 1\\
		& $102121\xrightarrow{\beta(1,2,5,5)} \textbf{2211}01 = 2102 \xrightarrow{\beta(1,1,3,3)} 10\textbf{22}=102$ & 2\\ \hline
 \end{tabular}
\caption{(Supplementary) This table shows the derivation of the grouping distance for normalized ternary strings of length 5 and 6 deduced from List (\ref{List-Pref}). $n$ is the length of normalized ternary strings and  $d_{pbi}^{g}(s)$ is the grouping distance. Here, binary strings like 101010 or 121212 are omitted.}
\label{tab1}
\end{table}

\begin{table}[h!]
\centering
 \begin{tabular}{ | c | l | c| }
   \hline
   $n$ & derivation &  $d_{rpbi}^{g}(s)$\\ \hline
   \multirow{6}{*}{5} & $10102\xrightarrow{\beta(2,2,3,4)} \textbf{1100}2 = 102$ & 1\\
		& $10120\xrightarrow{\beta(2,2,3,4)} \textbf{11}2\textbf{00} = 120$ & 1\\
		& $10121\xrightarrow{\beta(2,2,5,5)} \textbf{111}20 = 120$ & 1\\
		& $10201\xrightarrow{\beta(2,2,4,4)} \textbf{11}2\textbf{00} = 120$ & 1\\
 		& $10202\xrightarrow{\beta(2,2,5,5)} 1\textbf{2200}= 120$ & 1\\
		& $10210\xrightarrow{\beta(2,2,4,4)} \textbf{11}2\textbf{00} = 120$ & 1\\
		& $10212\xrightarrow{\beta(2,2,4,5)} \textbf{1122}0 = 120$ & 1\\ \hline
   \multirow{12}{*}{6} & $101012\xrightarrow{\beta(2,2,5,5)} \textbf{111}\textbf{00}2 = 102$ & 1\\ 
		& $101021\xrightarrow{\beta(2,2,6,6)} \textbf{111}02\textbf{0} = 1020\xrightarrow{\beta(2,2,3,3)} 12\textbf{00}=120$ & 2\\ 
		& $101201\xrightarrow{\beta(2,2,6,6)} \textbf{111}2\textbf{00} = 120$ & 1\\ 
		& $101202\xrightarrow{\beta(2,2,3,5)} \textbf{11}2\textbf{00}2 = 1202\xrightarrow{\beta(2,2,3,3)} 10\textbf{22}=102$ & 2\\
		& $101210\xrightarrow{\beta(2,2,5,5)} \textbf{111}2\textbf{00} = 120$ & 1\\
		& $101212\xrightarrow{\beta(2,2,4,4)} \textbf{111}202 = 1202 \xrightarrow{\beta(2,2,3,3)} 10\textbf{22}=102$ & 2 \\
		& $102010\xrightarrow{\beta(2,3,5,6)} \textbf{11000}2 = 102$ & 1\\
		& $102012\xrightarrow{\beta(2,2,5,6)} \textbf{112200}= 120$ & 1\\
		& $102020\xrightarrow{\beta(2,2,5,5)} 1\textbf{22}\textbf{000} = 120 $ & 1\\
		& $102021\xrightarrow{\beta(2,2,5,5)} 1\textbf{2200}1 = 1201\xrightarrow{\beta(2,2,4,4)} \textbf{11}02=102$ & 2 \\
		& $102101\xrightarrow{\beta(2,3,6,6)} \textbf{11100}2 = 102$ & 1\\
		& $102102\xrightarrow{\beta(2,3,4,5)} \textbf{110022}=102$ & 1\\
		& $102120\xrightarrow{\beta(2,2,4,5)} \textbf{112200} = 120$ & 1\\
		& $102121\xrightarrow{\beta(2,3,6,6)} \textbf{111}202 = 1202 \xrightarrow{\beta(2,2,3,3)} 10\textbf{22}=102$ & 2\\ \hline
 \end{tabular}
\caption{(Supplementary) This table shows the derivation of the grouping distance for normalized ternary strings of length 5 and 6 deduced from List (\ref{List-Pref}). $n$ is the length of normalized ternary strings and  $d_{rpbi}^{g}(s)$ is the grouping distance applying restricted prefix block-interchanges. Here, binary strings like 101010 or 121212 are omitted.}
\label{tab002}
\end{table}

\end{appendices}

\end{document}